\newfont{\bb}{msbm10 at 12pt}
\newcommand{\p}{\partial}
\newcommand{\dd}{{\rm d}}
\newcommand{\bd}{\begin{definition}}                
\newcommand{\ed}{\end{definition}}                  
\newcommand{\bc}{\begin{corollary}}                 
\newcommand{\ec}{\end{corollary}}                   
\newcommand{\bl}{\begin{lemma}}                     
\newcommand{\el}{\end{lemma}}                       
\newcommand{\bp}{\begin{proposition}}            
\newcommand{\ep}{\end{proposition}}                
\newcommand{\bere}{\begin{remark}}                  
\newcommand{\ere}{\end{remark}}                     
\newcommand{\bt}{\begin{theorem}}
\newcommand{\et}{\end{theorem}}
\newcommand{\be}{\begin{equation}}
\newcommand{\ee}{\end{equation}}
\newcommand{\bit}{\begin{itemize}}
\newcommand{\eit}{\end{itemize}}
\newtheorem{theorem}{Theorem}[section]
\newtheorem{corollary}[theorem]{Corollary}
\newtheorem{lemma}[theorem]{Lemma}
\newtheorem{proposition}[theorem]{Proposition}
\theoremstyle{definition}
\newtheorem{definition}[theorem]{Definition}
\theoremstyle{remark}
\newtheorem{remark}[theorem]{Remark}
\newtheorem{example}[theorem]{Example}
\begin{document}
%

\title{Non-imprisonment conditions on spacetime}

\author{E. Minguzzi \footnote{Dipartimento di Matematica Applicata, Universit\`a degli Studi di Firenze,  Via
S. Marta 3,  I-50139 Firenze, Italy. E-mail:
ettore.minguzzi@unifi.it}}

\date{}
\maketitle

\begin{abstract}
\noindent The non-imprisonment conditions on spacetimes are studied.
It is proved that the non-partial imprisonment property implies the
distinction property. Moreover, it is proved that feeble
distinction, a property which stays between weak distinction and
causality, implies non-total imprisonment. As a result the
non-imprisonment conditions can be included in the causal ladder of
spacetimes. Finally, totally imprisoned causal curves are  studied
in detail, and results concerning the existence and properties of
minimal invariant sets are obtained.
\end{abstract}

%


\section{Introduction}
Given a spacetime, i.e. a time oriented Lorentzian manifold, the
non-total future imprisonment condition is satisfied if no
future-inextendible causal curve can enter and remain in a compact
set. Analogously, the non-partial future imprisonment condition
requires that no future-inextendible causal curve reenters a compact
set indefinitely in the future. Since the formulation of these
properties uses only the causal structure of spacetime, it is
expected that they could be related with other conformal invariant
properties such as strong causality or distinction \cite{beem96}.

The classic book by Hawking and Ellis \cite[Sect. 6.4]{hawking73}
devotes to this issue several interesting propositions.
Unfortunately, some of them prove to be too weak and as today  the
relationship between non-partial (total) imprisonment  and the other
causality properties has yet to be clarified. The aim of this work
is to show that the non-imprisonment conditions can be included in
the causal ladder of spacetimes \cite{hawking74,minguzzi06c}, a
hierarchy of conformal invariant properties which is very useful in
order to establish at first sight the relationship between the most
common conformal invariant properties that have appeared in the
literature.

Note that in general the generic conformal invariant property does
not find a place in the causal ladder. For instance, the condition
of reflectivity \cite{kronheimer67,minguzzi06c}, defined by the
property $I^{+}(q) \subset I^{+}(p) \Leftrightarrow I^{-}(p) \subset
I^{-}(q)$, despite being conformal invariant, is not present by
itself in the ladder (one has to add to it some distinguishing
condition so as to obtain the level of causal continuity
\cite{hawking74}). Thus the fact that the non-partial and non-total
imprisonment conditions can find a place in the causal ladder is a
rather fortunate circumstance given the importance of these
conditions for the study of the causal structure of spacetime.

Figure \ref{impr} displays the relevant part of the causal ladder
and summarizes the final picture as will be given by this work. In
it I included some results recently obtained in \cite{minguzzi07e}.

I denote with $(M,g)$ a $C^{r}$ spacetime (connected, time-oriented
Lorentzian manifold), $r\in \{3, \dots, \infty\}$ of arbitrary
dimension $n\geq 2$ and signature $(-,+,\dots,+)$. The subset symbol
$\subset$ is reflexive, $X \subset X$. Several versions of the limit
curve theorem will be repeatedly used. The reader is referred to
\cite{minguzzi07c} for a sufficiently strong formulation.

\begin{figure}
\centering
\includegraphics[height=187pt]{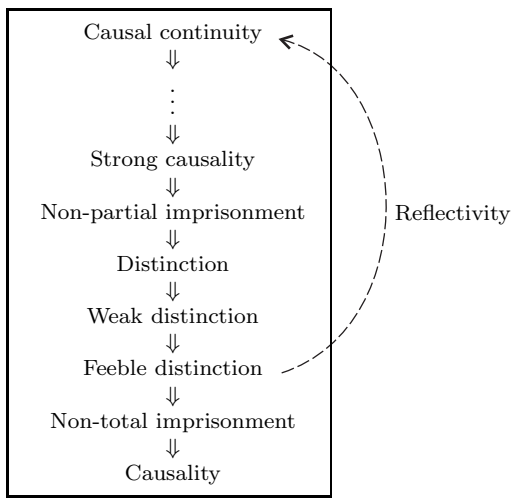}
\caption{The portion of the causal ladder which summarizes the
results of this work and \cite{minguzzi07e} concerning the causal
ladder. An arrow $A \Rightarrow B$ means that $A$ implies $B$ and
there are examples which show that $A$ differs from $B$.
Note that feeble distinction and reflectivity imply causal
continuity, for a proof of this result see \cite{minguzzi07e}.}
\label{impr}
\end{figure}


\section{Non-partial imprisonment}

In this section I shall introduce some definitions and basic
results. I will then consider the property of non-partial
imprisonment leaving the property of non-total imprisonment to the
next section.

\begin{definition}
A  future inextendible causal curve $\gamma: I \to M$, is {\em
totally future imprisoned} in the compact  set $C$ if there is $t
\in I$, such that for every $t'>t$, $t' \in I$, $\gamma(t') \in C$,
i.e. if it enters and remains in $C$. It is {\em partially future
imprisoned} if for every $t \in I$, there is $t'>t$, $t' \in I$,
such that $\gamma(t') \in C$, i.e. if it does not  remain in the
compact set it continually returns to it.  The curve {\em escapes to
infinity in the future} if it is not partially future imprisoned in
any compact set. Analogous definitions hold in the past case.
\end{definition}

\begin{remark}
In Hawking and Ellis \cite[p. 194]{hawking73} it is stated that a
future inextendible causal curve $\gamma: [0,b) \to M$ which is not
partially imprisoned in a compact set, intersects every compact set
only a finite number of times. However, this statement is incorrect
unless very strong differentiability conditions are imposed on the
curve. The point is that the causal curve may escape and reenter the
same compact set infinitely often while the parameter does not go to
$b$. Consider for instance 1+1 Minkwoski spacetime $\dd s^2=-\dd
t^2+\dd x^2$ and the $C^{k-2}$, $k>2$, timelike curve
$\gamma=(t,x(t))$ with $x(t)=0$ for $t\le 0$,
$x(t)=\frac{1}{k+2}(\tanh t)^k \sin (1/\tanh t)$ for $ t > 0 $. It
enters and escapes the compact set $ [-1,1] \times [-1,0]$ an
infinite number of times on any neighborhood of $t=0$.
\end{remark}

The previous definitions extend to the spacetime

\begin{definition}
A spacetime is {\em  non-total future imprisoning} if no future
inextendible causal curve is totally future imprisoned in a compact
set. A spacetime is {\em  non-partial future imprisoning} if no
future inextendible causal curve is partially future imprisoned in a
compact set. Analogous definitions hold in the past  case.
\end{definition}

Actually, Beem  proved \cite[theorem 4]{beem76} that a spacetime is
non-total future imprisoning if and only if it is non-total past
imprisoning, thus in the non-total case one can simply speak of the
{\em non-total imprisoning} property (condition $N$, in Beem's
terminology \cite{beem76}). This result will also be obtained in the
next section. A spacetime which is both non-partial future
imprisoning and non-partial past imprisoning is simply said to be
{\em non-partial imprisoning}.

The following result is immediate

\begin{proposition}
The non-total  imprisonment condition implies causality.
\end{proposition}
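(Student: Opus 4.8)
The plan is to argue by contraposition. Recall that a spacetime is \emph{causal} if it admits no closed causal curve; I shall assume causality fails and manufacture a future-inextendible causal curve totally future imprisoned in a compact set, contradicting the (non-total future) imprisonment hypothesis contained in the non-total imprisonment condition.

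If causality fails there is a non-constant future-directed closed causal curve $\gamma:[0,1]\to M$ with $\gamma(0)=\gamma(1)$. First I would form the infinite concatenation $\tilde\gamma:[0,\infty)\to M$ obtained by traversing the loop repeatedly, $\tilde\gamma(t)=\gamma(t-\lfloor t\rfloor)$. At the junction point $\gamma(0)=\gamma(1)$ both the end-tangent and the initial tangent are future-directed causal, so the concatenation is again a continuous, locally Lipschitz, future-directed causal curve. Its image coincides with the compact set $C:=\gamma([0,1])$, so $\tilde\gamma$ stays inside $C$ for all parameter values and is therefore totally future imprisoned in $C$ as soon as it is shown to be future-inextendible.

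The step that requires care—and the main obstacle—is precisely the future-inextendibility of $\tilde\gamma$, i.e. the exclusion of a future endpoint $p$. Here I would exploit that $\gamma$ is non-constant, so $C$ contains a point $q\neq p$. Choosing, by the Hausdorff property, a neighborhood $U$ of $p$ with $q\notin U$, I note that $\tilde\gamma$ revisits $q$ at the unbounded sequence of parameters $s_q,s_q+1,s_q+2,\dots$, where $\gamma(s_q)=q$; hence $\tilde\gamma(t)\notin U$ for arbitrarily large $t$, which contradicts the defining property of a future endpoint. Thus $\tilde\gamma$ has no future endpoint and is future-inextendible. Combining this with the previous paragraph, $\tilde\gamma$ is a future-inextendible causal curve totally future imprisoned in the compact set $C$, contradicting the hypothesis. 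Therefore the spacetime must be causal.
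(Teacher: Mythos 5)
Your proof is correct and takes essentially the same approach as the paper: both convert the closed causal curve into a future-inextendible curve by traversing the loop infinitely many times, observing that its image is the compact image of the original loop, contradicting non-total imprisonment. Your explicit verification of future-inextendibility (ruling out a future endpoint via the Hausdorff separation argument) merely fills in a step the paper leaves implicit.
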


\begin{proof}
Assume causality does not hold. A closed causal curve can be made
inextendible while keeping the same compact image, simply by
extending the parametrization so as to make many rounds over the
original curve. The result is a inextendible curve whose image is
contained in a compact set (since it is a compact set itself), in
contradiction with the non-total imprisonment condition.
\end{proof}

Carter constructed an  example of causal but total imprisoning
spacetime. This classical example can be found in figure 39 of
\cite{hawking73}. Thus causality and non-total imprisonment do not
coincide.

The next result is well known \cite[Prop. 6.4.7]{hawking73},
\cite[Prop. 3.13]{beem96}.

\begin{proposition}
If the spacetime is strongly causal then it is non-partial
imprisoning.
\end{proposition}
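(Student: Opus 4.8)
The plan is to argue by contradiction, exploiting the fact that strong causality furnishes, around every point, a neighbourhood which no causal curve can leave and re-enter. Suppose the spacetime is strongly causal but that some future inextendible causal curve $\gamma\colon [0,b)\to M$ is partially future imprisoned in a compact set $C$. First I would extract from the definition of partial imprisonment a strictly increasing sequence $t_n\to b$ with $\gamma(t_n)\in C$: since for every parameter value there is a later return to $C$, one simply chooses the $t_n$ successively larger and tending to $b$. By compactness of $C$ a subsequence of the points $\gamma(t_n)$ converges to some $q\in C$, which is thus an accumulation point of $\gamma$ at its future end.

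Next I would use strong causality at $q$, in the form: every neighbourhood of $q$ contains a neighbourhood $V$ which no causal curve meets in a disconnected parameter set (equivalently, a causally convex $V$, so that any causal segment with endpoints in $V$ stays in $V$). The crucial consequence is that $\{t:\gamma(t)\in V\}$ is connected: if $\gamma(t_1),\gamma(t_2)\in V$ with $t_1<t_2$, then $\gamma|_{[t_1,t_2]}$ is a causal segment with endpoints in $V$ and hence remains in $V$. Because $q$ is an accumulation point, infinitely many $\gamma(t_n)$ lie in $V$ with $t_n\to b$, so this connected set is an interval whose right endpoint is $b$; therefore there is $\alpha$ with $\gamma(t)\in V$ for all $t>\alpha$.

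Finally I would let the chosen neighbourhood shrink. Given any neighbourhood $U$ of $q$, the previous step applied to a causally convex $V\subseteq U$ shows that $\gamma(t)\in V\subseteq U$ for all sufficiently large $t$. As this holds for \emph{every} neighbourhood $U$ of $q$, it is precisely the statement that $\gamma(t)\to q$ as $t\to b$, so $q$ is a future endpoint of $\gamma$, contradicting future inextendibility. The past case follows by time reversal, and the two together yield the non-partial imprisonment property.

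As for the main obstacle: the delicate point, highlighted in the Remark above, is that a partially imprisoned curve may leave and re-enter $C$ infinitely often while its parameter stays bounded away from $b$, so one cannot argue through a naive count of the number of intersections with $C$. The present approach avoids this entirely, since the returning parameters $t_n$ are chosen so that they do tend to $b$, and causal convexity controls the connectedness of $\gamma^{-1}(V)$ no matter how erratically $\gamma$ behaves outside $V$. I therefore expect the only points requiring care to be the correct extraction of the sequence $t_n\to b$ and the identification of $q$ as a genuine future endpoint.
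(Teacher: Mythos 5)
Your proof is correct, and worth comparing against the literature: the paper itself offers no proof of this proposition, stating it as well known and citing \cite[Prop.~6.4.7]{hawking73} and \cite[Prop.~3.13]{beem96}. The classical argument there is a covering one: cover the compact set $C$ by finitely many local causality (causally convex) neighbourhoods, observe that a causal curve meets each in a connected parameter interval so it can never re-enter one it has left, and conclude by pigeonhole that the imprisoned curve must eventually remain in one of them, from which a contradiction is extracted. Your argument localizes instead at a single accumulation point $q$ of the returning sequence $\gamma(t_n)$, $t_n \to b$: causal convexity of a neighbourhood $V \ni q$ makes $\gamma^{-1}(V)$ an interval with supremum $b$, and shrinking $V$ inside arbitrary neighbourhoods of $q$ shows $\gamma(t) \to q$, i.e.\ $q$ is a future endpoint, contradicting inextendibility. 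All steps are sound (in particular the extraction of $t_n \to b$, the convexity argument giving that $\gamma^{-1}(V)$ is an interval, and the identification of the endpoint), and the time-dual settles the past case. Your route buys two things: it avoids the finite cover and pigeonhole entirely, and it proves a slightly sharper statement, namely that strong causality need only hold at the accumulation points of the curve (the set $\Omega_f(\gamma)$ in the paper's later notation) rather than on all of $C$; this fits nicely with the paper's subsequent analysis, where the limit sets $\Omega_f(\gamma)$ are exactly the locus where causality pathologies of imprisoned curves concentrate.
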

Strong causality differs from non-partial imprisonment, see for
instance figure 38 of \cite{hawking73}.

The next observation is important for the placement of the
non-partial imprisonment  condition  in the causal ladder

\begin{proposition} \label{nix}
If a  spacetime  is non-partial future (resp. past) imprisoning then
it  is  past (resp. future) distinguishing. In particular
non-partial imprisoning spacetimes are distinguishing.
\end{proposition}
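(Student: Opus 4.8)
The natural strategy is to argue by contraposition: assuming the spacetime is \emph{not} past distinguishing, I would exhibit a future-inextendible causal curve that is partially future imprisoned in a compact set, contradicting the non-partial future imprisoning hypothesis. The past (resp.\ future) version follows by time reversal, and the ``in particular'' clause then follows by combining the two. I would begin from the local characterization of the failure of past distinguishing. Fix a point $p$ at which past distinguishing fails and a convex normal neighborhood $U\ni p$ with compact closure $C=\overline{U}$. Since inside a convex normal neighborhood a past-directed causal curve issuing from $p$ cannot meet an arbitrarily small subneighborhood of $p$ in a disconnected set, the failure of distinction forces past-directed causal curves $\sigma_k$ from $p$ that leave $U$ and return to points $z_k$ with $z_k\to p$. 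Reversing the time orientation gives future-directed causal curves $\gamma_k$ from $z_k\to p$ that end at $p$ and each exit $U$; parametrizing by $h$-arc length for an auxiliary complete Riemannian metric $h$, the exit forces $L_k\ge 2\,d_h(p,\partial U)>0$.

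Next I would feed the family $\{\gamma_k\}$, whose initial points converge to $p$, into the limit curve theorem of \cite{minguzzi07c}, obtaining a future-directed causal limit curve $\gamma$ issuing from $p$. The argument then splits according to whether the lengths $L_k$ stay bounded. If they do, the limit curve is a nontrivial closed causal curve through $p$, which by the device used in the proof of the proposition that non-total imprisonment implies causality can be reparametrized into a future-inextendible curve with compact image; this is totally, hence partially, future imprisoned, a contradiction.

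If instead $L_k\to\infty$, then $\gamma$ is future-inextendible, and I must show it is partially future imprisoned in $C$. Here I would use that the first exit points $x_k\in\partial U$ accumulate, by compactness of $\partial U$, at some $x\in\partial U$ with $x\ne p$, and that these exit points occur at bounded parameter, so $x$ lies on $\gamma$ and hence $p\le x$. I would then run a second limit curve argument on the tails of the $\gamma_k$, namely the segments from $x_k\to x$ to $p$, based at $x_k\to x$. When these tails also have bounded length the argument produces a future-directed causal curve realizing $x\le p$, so $p$ and $x$ lie on a closed causal curve and we conclude as before. The remaining possibility is that the tails escape every compact set, and it is precisely here that the non-partial future imprisoning hypothesis must be used directly rather than through a causality violation.

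The delicate heart of the proof is exactly this last possibility: upgrading the statement ``the approximating curves return arbitrarily close to $p$'' into ``some future-inextendible causal curve re-enters the compact set $C$ indefinitely.'' The bounded-length cases are routine and always close up into a causality violation, but in the genuinely unbounded case the return of the $\gamma_k$ to $p$ is pushed to infinite parameter, so recurrence of a single limit curve to $C$ does not follow formally and must instead be extracted from the accumulation of the exit and return data on the compact set $C$. I expect the careful bookkeeping of these excursions, together with the verification that the resulting curve is genuinely future-inextendible rather than merely extendible through $p$, to be where the real work lies.
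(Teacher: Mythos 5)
Your proposal has a genuine gap, and you have in fact named it yourself: the case in which neither limit-curve argument closes up into a causality violation is left entirely open, with only the hope that ``careful bookkeeping'' will finish it. This is not a technicality --- it is the whole content of the proposition. A non-past-distinguishing spacetime can perfectly well be causal (causality does not imply distinction), so in the generic situation both of your ``bounded length'' cases are vacuous and you land precisely in the unresolved case. Moreover, the unresolved case cannot be fixed within your strategy: a limit of causal curves each of which returns close to $p$ need not itself return to any compact set, because (as you note) the returns are pushed off to infinite parameter. Limit curves destroy exactly the recurrence information you need, so no refinement of the limit-curve bookkeeping will produce the partially imprisoned curve.

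The missing idea is that one should \emph{concatenate} rather than take limits, using the hypothesis $I^{-}(x)=I^{-}(z)$ globally instead of through a local characterization at one point. This is what the paper does: take $x\neq z$ with $I^{-}(x)=I^{-}(z)$, disjoint relatively compact neighborhoods $U\ni x$, $V\ni z$, and pick $x_1\in I^{-}(x)\cap \dot{U}$. Since $x_1\in I^{-}(z)$ there is a timelike curve from $x_1$ to $z$, which must cross $\dot{V}$ at some $z_1\ll z$; since $z_1\in I^{-}(z)=I^{-}(x)$ there is a timelike curve from $z_1$ to $x$, crossing $\dot{U}$ at some $x_2\in I^{-}(x)$; and so on indefinitely. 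The equality of pasts guarantees at every stage that the excursion can be prolonged, so the recurrence is built into the construction by hand: the concatenation $\sigma$ is a single future-directed timelike curve meeting $\dot{U}$ and $\dot{V}$ at arbitrarily large parameter. It is future inextendible because a curve oscillating between two disjoint compact sets $\bar{U}$ and $\bar{V}$ can have no future endpoint, and it is partially future imprisoned in $\bar{U}$ (and in $\bar{V}$), contradicting the hypothesis. No limit curve theorem, no convex normal neighborhoods, and no case analysis are needed; your approach, by contrast, cannot be completed as stated.
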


\begin{proof}
I give  the proof in the ``non partial future imprisoning - past
distinguishing'' case, the other case being analogous.

 Assume $(M,g)$ is non-partial future imprisoning. If
$(M,g)$ is non-past distinguishing there are $x \ne z$ such that
$I^{-}(x)=I^{-}(z)$. Let $U\ni x$ be a relatively compact set such
that $z \notin \bar{U}$, and let $V\ni z$ relatively compact set
such that $x \notin \bar{V}$, $\bar{U}\cap\bar{V}=\emptyset$. Take
$x_1 \in I^{-}(x)\cap \dot{U}$, then there is a timelike curve
$\sigma_1^z$ which connects $x_1$ to $z$. Let $z_1 \in \sigma_1^z
\cap \dot{V} \subset I^{-}(z)$, and parametrize $\sigma_1^z$ so that
$x_1=\sigma^z_1(0)$ and $z_1=\sigma^z_1(1)$. There is a timelike
curve $\sigma_1^x$ which connects $z_1$ to $x$. Let $x_2 \in
\sigma_1^x \cap \dot{U}\subset I^{-}(x)$, and parametrize
$\sigma_1^x$ so that $z_1=\sigma^x_1(1)$ and  $x_2=\sigma^x_1(2)$.
Continue in this way and obtain sequences $x_n \in \dot{U}$, $z_n
\in \dot{V}$, $\sigma_n^x$, $\sigma_n^z$. The timelike curve
\[
\sigma=
 \ldots\circ \sigma_2^x\vert_{[3,4]}  \circ \sigma_2^z\vert_{[2,3]} \circ \sigma_1^x\vert_{[1,2]} \circ \sigma_1^z\vert_{[0,1]}\]
 is future inextendible and is partially future imprisoned in both
$\bar{U}$ and $\bar{V}$. The contradiction proves that $(M,g)$ is
past distinguishing.
\end{proof}

\begin{figure}
\centering
\includegraphics[width=7.5cm]{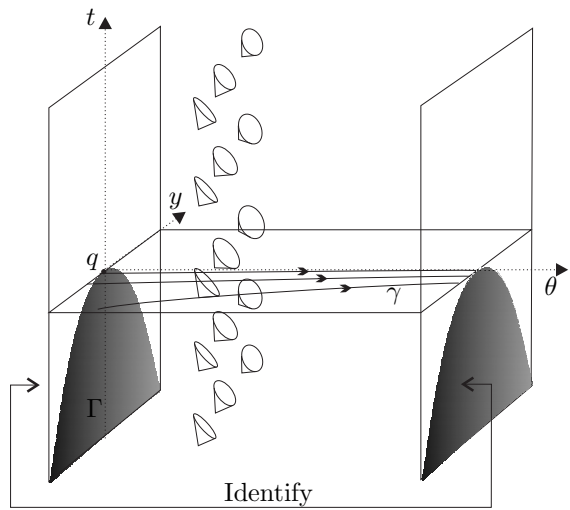}
\caption{A partial future imprisoning spacetime which is past
distinguishing. The set $\Gamma$ has been removed. It is  made of
all the points which can be connected to $q$ with a causal curve
that intersects $\theta=0$ at most at the endpoints.} \label{nonp}
\end{figure}

The next example proves that past  distinction differs from
non-partial future imprisonment (analogously future distinction
differs from non-partial past imprisonment).
\begin{example} \label{pdv}
Consider the spacetime $N=\mathbb{R} \times S^1 \times \mathbb{R}$
of coordinates $(t,\theta,y)$, $\theta \in [0,2\pi)$, and metric
\[
g=-\dd t\otimes    \dd \theta-   \dd \theta\otimes \dd t+t^2(\dd y-y
\dd \theta)^2+(\dd y+y \dd \theta)^2
\]
The vector field $\p/\p \theta$ is Killing and the field $\p/\p
\theta-y\p/\p y $ is lightlike on the null surface $t=0$, while
$\p/\p t$ is lightlike and future directed everywhere. If $V$ is the
tangent vector to a future  directed  causal curve then $g(V,\p/\p
t)\le 0$ which reads $\dd \theta [V]\ge 0$. The equality sign holds
only if $V \propto \p/\p t$, while  in all the other cases $\dd
\theta [V]> 0$. But it is also $g(V,V)\le 0$ which leads to $\dd
t[V] \dd \theta[V] \ge 0$, from which it follows that $t$ is a
quasi-time function, that is, it is non-decreasing over every causal
curve. The curve $\gamma=(t(\lambda), \theta(\lambda),y(\lambda))$
with $t(\lambda)=0$, $\theta(\lambda)=\lambda$, $y=-\exp(-\lambda)$
is a lightlike line partially imprisoned in the compact set
$[-1,1]\times [\pi/2,\pi]\times [-1,1]$. Call $\Gamma$ the set of
events on the surface $\theta=0$ which can be  connected to
$q=(0,0,0)$ through a causal curve which intersects the surface
$\theta=0$ only at the endpoints. Note that no point of $\gamma$ can
be connected with a causal curve to $q$, indeed the causal curve
would have to be lightlike in order to keep itself in the surface
$t=0$, which would imply the coincidence with $\gamma$ which,
however, does not pass through $q$. Thus $\gamma \cap
\Gamma=\emptyset$.

Remove $\Gamma$ from the spacetime, then $\gamma$ is still partially
imprisoned in the new spacetime $(M,g\vert_M)$ (see figure
\ref{nonp}) but while $(N,g)$ was not past distinguishing,
$(M,g\vert_M)$ is in fact past distinguishing. Indeed, the only set
of points at which past distinction can be violated is the $\theta$
axis, however, thanks to the removal of $\Gamma$ the points on it
have all distinct chronological pasts. Indeed, if $r_1<r_2$,
$r_1,r_2$ belong to the $\theta$ axis, then it can't be $r_2 \in
\overline{I^{-}(r_1)}$, for there would be a sequence of timelike
curves starting from a neighborhood of $r_2$ and reaching $r_1$
which is impossible since they would intersect $\Gamma$.
\end{example}

If from the spacetime of example \ref{pdv} one also removes a set
analogous to $\Gamma$, that is the set of events on the surface
$\theta=0$ which can be reached from $q$ with a causal curve which
intersects the surface $\theta=0$ only at the endpoints, then one
gets a spacetime that is distinguishing but not non-partial
imprisoning, thus distinction and non-partial imprisonment differ.

\section{Non-total imprisonment}

\begin{definition}
Let $\gamma: \mathbb{R} \to M$ be a causal curve. Denote with
$\Omega_f(\gamma)$ and $\Omega_p(\gamma)$ the following sets
\begin{eqnarray}
\Omega_f(\gamma)&= \bigcap_{t \in \mathbb{R}}
\overline{\gamma_{[t,+\infty)}} \\
%
%
\Omega_p(\gamma)&= \bigcap_{t \in \mathbb{R}}
\overline{\gamma_{(-\infty,t]}}
\end{eqnarray}
\end{definition}
They give, respectively, the set of accumulation points in the
future  of $\gamma$ and the set of accumulation points in the past
of $\gamma$. The sets $\Omega_f$ and $\Omega_p$ are well known from
the study of dynamical systems \cite[Sect. 3.2]{perko91}. The points
of $\Omega_f(\gamma)$ are called $\omega$-limit points of $\gamma$,
while the points of $\Omega_p(\gamma)$ are called $\alpha$-limit
points of $\gamma$. Note, however, that the analogy with dynamical
systems is not complete because so far no flow has been defined on
$M$.

\begin{proposition} \label{kic}
Let $\gamma: \mathbb{R} \to M$ be a  inextendible causal curve. The
set $\Omega_f(\gamma)$ is closed. The curve $\gamma$ is partially
future imprisoned in a compact set iff $\Omega_f(\gamma) \ne
\emptyset$. The curve  $\gamma$ is  totally future imprisoned in a
compact  set $C$ iff $\Omega_f(\gamma) \ne \emptyset$ and
$\Omega_f(\gamma)$ is compact. In this  case $\Omega_f(\gamma)$ is
the intersection of all the compact sets in which $\gamma$ is
totally future imprisoned, moreover, $\Omega_f(\gamma)$ is
connected. Analogous statements hold in the past case. Finally, for
every causal curve $\gamma$, the closure of its image is given by
$\overline{\gamma}=\Omega_p(\gamma)\cup \gamma \cup
\Omega_p(\gamma)$.
\end{proposition}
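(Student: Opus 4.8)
The plan is to treat the claims in the order stated, building each on the previous ones. Closedness of $\Omega_f(\gamma)$ is immediate, since by definition it is an intersection of the closed sets $\overline{\gamma_{[t,+\infty)}}$. For the equivalence ``partially future imprisoned in a compact set $\Leftrightarrow \Omega_f(\gamma)\neq\emptyset$'', I would argue as follows. If $p\in\Omega_f(\gamma)$ then every neighborhood of $p$ contains points $\gamma(s)$ with arbitrarily large $s$; choosing a compact neighborhood $K$ of $p$ shows that $\gamma$ returns to $K$ for arbitrarily large parameter, i.e. it is partially future imprisoned in $K$. Conversely, if $\gamma$ is partially future imprisoned in a compact set $K$, then $K\cap\overline{\gamma_{[t,+\infty)}}$ is nonempty for every $t$; these are nested closed subsets of the compact set $K$, so by the finite intersection property their total intersection $K\cap\Omega_f(\gamma)$ is nonempty, whence $\Omega_f(\gamma)\neq\emptyset$.

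For the total imprisonment statement the forward direction is easy: if $\gamma(t)\in C$ for all $t>t_0$ then $\overline{\gamma_{[t,+\infty)}}\subset C$ for $t>t_0$, so $\Omega_f(\gamma)\subset C$ is a closed subset of a compact set and hence compact (and nonempty, since total imprisonment implies partial imprisonment). The converse is the \emph{main obstacle}. Assuming $\Omega_f(\gamma)$ is compact and nonempty, I would use local compactness to choose a relatively compact open set $W$ with $\Omega_f(\gamma)\subset W$, and set $C=\overline{W}$. Suppose $\gamma$ were not totally future imprisoned in $C$; then there is a sequence $s_n\to+\infty$ with $\gamma(s_n)\notin W$, while nonemptiness of $\Omega_f(\gamma)\subset W$ provides times $t_n\to+\infty$ with $\gamma(t_n)\in W$. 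Passing between these, a boundary-crossing (intermediate-value) argument produces times $u_n\to+\infty$ with $\gamma(u_n)\in\partial W=\overline{W}\setminus W$. Since $\partial W\subset C$ is compact, a subsequence converges to some $q\in\partial W$; because $u_n\to+\infty$ this $q$ lies in $\Omega_f(\gamma)$, contradicting $\Omega_f(\gamma)\subset W$ together with $q\notin W$. Hence $\gamma$ is totally future imprisoned in $C$.

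The characterization of $\Omega_f(\gamma)$ as the intersection of all compact sets $C$ in which $\gamma$ is totally future imprisoned follows by combining the previous step with a separation argument. The forward direction above already gives $\Omega_f(\gamma)\subset C$ for every such $C$, so $\Omega_f(\gamma)$ is contained in the intersection. For the reverse inclusion, given $p\notin\Omega_f(\gamma)$, Hausdorffness and compactness of $\Omega_f(\gamma)$ let me separate $p$ from $\Omega_f(\gamma)$ by a relatively compact open $W\supset\Omega_f(\gamma)$ with $p\notin\overline{W}$; by the preceding paragraph $\gamma$ is totally imprisoned in $C=\overline{W}$, which omits $p$, so $p$ is not in the intersection. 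For connectedness, suppose $\Omega_f(\gamma)=A\sqcup B$ with $A,B$ nonempty, disjoint, and (being closed in a compact set) compact; normality furnishes disjoint open $U\supset A$, $V\supset B$. Since $\gamma$ accumulates on both $A$ and $B$, it visits $U$ and $V$ at arbitrarily large parameters, and the same intermediate-value argument yields times $w_n\to+\infty$ with $\gamma(w_n)\notin U\cup V$. Total imprisonment keeps the $\gamma(w_n)$ in a compact set, so a subsequence converges to a point $q\in\Omega_f(\gamma)$; but $q\notin U\cup V\supset\Omega_f(\gamma)$, a contradiction.

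The past statements are proved by the identical arguments after reversing the time orientation (equivalently, replacing $\gamma_{[t,+\infty)}$ by $\gamma_{(-\infty,t]}$). Finally, for the decomposition of $\overline{\gamma}$, the inclusion $\Omega_p(\gamma)\cup\gamma\cup\Omega_f(\gamma)\subset\overline{\gamma}$ is clear, since each $\Omega$ is an intersection of subsets of $\overline{\gamma}$. For the reverse inclusion, take $p\in\overline{\gamma}$ and a sequence $\gamma(s_n)\to p$: if $\{s_n\}$ has a bounded subsequence it has a convergent one and $p\in\gamma$ by continuity, while if $|s_n|\to\infty$ a subsequence tends to $+\infty$ or $-\infty$, placing $p$ in $\Omega_f(\gamma)$ or $\Omega_p(\gamma)$ respectively. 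Here I read the displayed right-hand side as $\Omega_p(\gamma)\cup\gamma\cup\Omega_f(\gamma)$, correcting an evident misprint.
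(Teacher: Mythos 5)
Your proof is correct, and its overall skeleton matches the paper's: closedness from the definition, accumulation-point arguments for the two equivalences, a crossing argument for the hard direction and for connectedness, and the three-case sequence argument for the closure decomposition (including your correct reading of the misprint $\Omega_p\cup\gamma\cup\Omega_p$ as $\Omega_p\cup\gamma\cup\Omega_f$). The one place where you genuinely diverge is the key implication ``$\Omega_f(\gamma)$ non-empty and compact $\Rightarrow\gamma$ totally future imprisoned'' and the ensuing characterization of $\Omega_f(\gamma)$ as the intersection of the imprisoning compact sets: the paper introduces an auxiliary complete Riemannian metric $h$, invokes Hopf--Rinow to make the distance neighborhoods $\Gamma_\epsilon=\{y:\rho(y,\Omega_f(\gamma))\le\epsilon\}$ compact, and shows $\gamma$ is imprisoned in each $\Gamma_\epsilon$ because otherwise it would cross the level set $S_{\epsilon/2}$ infinitely often and accumulate there; the nested family with $\bigcap_{\epsilon>0}\Gamma_\epsilon=\Omega_f(\gamma)$ then yields the intersection property for free. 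You instead use only local compactness: a relatively compact open $W\supset\Omega_f(\gamma)$, a topological boundary-crossing argument producing points of $\gamma$ on the compact set $\partial W$ at arbitrarily large times, and, for the intersection property, a separate point-set separation step choosing $W$ with $p\notin\overline{W}$. The two arguments share the same mechanism (a compact barrier that $\gamma$ must cross infinitely often, whose accumulation points would have to lie in $\Omega_f(\gamma)$ yet cannot), but yours is slightly more elementary in that it dispenses with the complete Riemannian metric, at the modest cost of that extra separation argument for the intersection claim.
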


\begin{proof}
The closure is a consequence of the definition as  intersection of
closed sets.


If $\gamma$ is  partially future imprisoned in the compact set $C$
then for every $n \in \mathbb{N}$ there is $t_n \in \mathbb{R}$ such
that $x_n=\gamma(t_n) \in C$ and $t_n>n$. If $x \in C$ is an
accumulation point for $x_n$, there is a subsequence $x_{n_k}=
\gamma(t_{n_k})$ such that $x_{n_k} \to x$. Choose $t \in
\mathbb{R}$ then every neighborhood $U\ni x$ contains $x_{n_k}$ for
large $k$, thus $x \in \overline{\gamma_{[t,+\infty)}}$ and since
$t$ is arbitrary $x \in \Omega_f(\gamma)$.
For the converse, assume $\Omega_f(\gamma)$ is not empty, and take
$x \in \Omega_f(\gamma)$. Let $V\ni x$ be a neighborhood of compact
closure, then $\gamma$ is partially future imprisoned in the compact
set $C=\overline{V}$.

Assume $\gamma$ is totally future imprisoned in a compact set $C$.
Since $T$ can be chosen so large that $\gamma_{[T,+\infty)} \subset
C$, it is $\Omega_f(\gamma)\subset C$, and in particular
$\Omega_f(\gamma)$ is compact. Call $I$ the intersection of all the
compact sets totally future imprisoning $\gamma$, then since
$\Omega_f(\gamma)\subset C$ holds for any such compact set $C$,
$\Omega_f(\gamma)\subset I$.

Now, assume only that $\Omega_f(\gamma)$ is non-empty and compact.
Let $h$ be an auxiliary complete Riemannian metric on $M$, and let
$\rho$ be the corresponding continuous distance function. By the
Hopf-Rinow (Heine-Borel) theorem any closed and bounded set of
$(M,h)$ is compact. Thus $\Gamma_\epsilon=\{y \in M:
\rho(y,\Omega_f(\gamma)) \le \epsilon \}$ is compact and
$\bigcap_{\epsilon >0} \Gamma_\epsilon = \Omega_f(\gamma)$. But
$\gamma$ is totally future imprisoned in each $\Gamma_\epsilon$,
$\epsilon>0$. Indeed, if not $\gamma$ intersects indefinitely the
set $S_{\epsilon/2}=\{y \in M:
\rho(y,\Omega_f(\gamma))=\epsilon/2\}$, which is compact and thus
there would be a accumulation point $z \in S_{\epsilon/2} \cap
\Omega_f(\gamma)$ a contradiction since $S_{\epsilon/2} \cap
\Omega_f(\gamma)=\emptyset$. Thus $\gamma$ is totally future
imprisoned in a compact set iff $\Omega_f(\gamma)$ is non-empty and
compact. From $\bigcap_{\epsilon
>0} \Gamma_\epsilon = \Omega_f(\gamma)$, it follows that $I\subset
\Omega_f(\gamma)$, and using the other inclusion,
$I=\Omega_f(\gamma)$.

As for the connectedness of $\Omega_f(\gamma)$, without loss of
generality we can assume $\gamma$ entirely contained in the compact
set $C$, and we already know that $\Omega_f(\gamma) \subset C$. If
there are two disjoint non-empty closed sets $A$ and $B$ such that
$\Omega_f=A\cup B$, then there are two  open sets $A'\supset A$,
$B'\supset B$, such that $\overline{A'}\cap
\overline{B'}=\emptyset$.  Since, by definition of $\Omega_f$,
$\gamma$ is partially imprisoned in $\overline{A'}$ and
$\overline{B'}$ it crosses infinitely often both sets and there is a
sequence of points $z_r=\gamma(t_r) \in \gamma \subset C$, $t_r \to
+\infty$, $z_r \notin \overline{B'}\cup \overline{A'}$. Thus there
is an accumulation point $z \in C\backslash(A \cup B)$ a
contradiction since $z \in \Omega_f(\gamma)$. The proof in the past
case is analogous.

For the last statement let $\gamma$ have domain $(a,b)$ (finiteness
irrelevant), and let $x \in \overline{\gamma}$. For some sequence
$t_n \in (a,b)$, $\gamma(t_n) \to x$. Either $t_n$ admits a
subsequence which converges to $t_0 \in (a,b)$, in which case by
continuity $x=\gamma(t_0)\in \gamma$, or there is a subsequence
which converges to $b$, in which case $x \in \Omega_f(\gamma)$ or
finally, there is a subseqeunce which congerges to $a$, in which
case $x \in \Omega_p(\gamma)$.

\end{proof}

\begin{definition}
A lightlike line is an achronal inextendible causal curve.
\end{definition}
Since every causal curve with endpoints which is not a lightlike
geodesic can be varied to give a timelike curve joining the same
points \cite[Prop. 4.5.10]{hawking73}, a lightlike line is
necessarily a lightlike geodesic which, again by achronality,
maximizes the Lorentzian distance between any of its points.
Conversely, a inextendible lightlike geodesic which maximizes the
Lorentzian distance between any of its points is clearly a line
since the Lorentzian length calculated along the curve vanishes
(indeed this is the definition given by \cite[Prop. 8.12]{beem96}).

\begin{proposition} \label{bai} Let $\mathcal{C}$ be the chronology violating set of $(M,g)$ and let $\gamma: \mathbb{R} \to M$ be a
causal curve
\begin{itemize}
\item[(i)] Let $\alpha$ be a lightlike line such that $\alpha\subset
\Omega_f(\gamma)$ then $\Omega_f(\alpha)\cup \Omega_p(\alpha)
\subset \Omega_f(\gamma)$.
\item[(ii)] If $\gamma \cap \mathcal{C}=\emptyset$ then $\Omega_f(\gamma)$ is achronal (and thus $\Omega_f(\gamma)\cap \mathcal{C}=\emptyset$). Moreover, given $y \in \Omega_f(\gamma)$  there passes through $y$ one and only one lightlike line $\alpha$.
This line is such that $\alpha\subset \Omega_f(\gamma)$.
\item[(iii)] If $\gamma \cap \mathcal{C}=\emptyset$ and $\gamma$ is a lightlike line then
either $\gamma\subset \Omega_f(\gamma)$ or $\gamma \cap
\Omega_f(\gamma)=\emptyset$. More generally,\footnote{I am indebted
to an anonymous referee for this statement and its proof.} if
$\gamma \cap \mathcal{C}=\emptyset$, but $\gamma$ is  only a causal
curve then $\gamma$  cannot leave $\Omega_f(\gamma)$ if it ever
enters it.

\end{itemize}
Analogous statements hold in the past case.
\end{proposition}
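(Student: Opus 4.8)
The plan is to handle the three items in order, with Proposition \ref{kic} and the limit curve theorem of \cite{minguzzi07c} as the main tools, and with the hypothesis $\gamma\cap\mathcal{C}=\emptyset$ (absence of closed timelike curves through $\gamma$) as the mechanism that rules out the bad configurations. Item (i) I expect to be essentially immediate: since $\Omega_f(\gamma)$ is closed (Proposition \ref{kic}) and $\alpha\subset\Omega_f(\gamma)$, one has $\overline{\alpha}\subset\Omega_f(\gamma)$; but the last statement of Proposition \ref{kic}, applied to $\alpha$, reads $\overline{\alpha}=\Omega_p(\alpha)\cup\alpha\cup\Omega_f(\alpha)$, whence $\Omega_f(\alpha)\cup\Omega_p(\alpha)\subset\overline{\alpha}\subset\Omega_f(\gamma)$.

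For the achronality in (ii) I would argue by contradiction. If $\Omega_f(\gamma)$ is not achronal there are $y_1,y_2\in\Omega_f(\gamma)$ with $y_2\in I^{+}(y_1)$; I pick $w$ with $y_1\ll w\ll y_2$ and relatively compact neighbourhoods $U_1\subset I^{-}(w)$ of $y_1$ and $U_2\subset I^{+}(w)$ of $y_2$, so that every point of $U_2$ lies in the chronological future of every point of $U_1$. Since $y_1,y_2\in\Omega_f(\gamma)$ the curve returns to both $U_1$ and $U_2$ at arbitrarily large parameters, so I can choose $s<s'$ with $\gamma(s)\in U_2$ and $\gamma(s')\in U_1$. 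Then $\gamma(s')\ll\gamma(s)$, while the arc $\gamma|_{[s,s']}$ gives $\gamma(s)\le\gamma(s')$; composing, $\gamma(s')\ll\gamma(s')$, so $\gamma(s')\in\mathcal{C}$, contradicting the hypothesis. Achronality then forces $\Omega_f(\gamma)\cap\mathcal{C}=\emptyset$, since a point of $\mathcal{C}$ lies in its own chronological future.

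For the line through $y\in\Omega_f(\gamma)$ I would use the limit curve theorem. Parametrizing $\gamma$ by $h$-arclength (with $h$ the complete auxiliary metric of Proposition \ref{kic}), choose $u_n\to+\infty$ with $\gamma(u_n)\to y$ and apply the limit curve theorem to the shifted inextendible curves $\gamma_n(u)=\gamma(u_n+u)$, whose initial points tend to $y$. A subsequence converges to an inextendible causal curve $\alpha$ with $\alpha(0)=y$; since $\alpha(u)=\lim_k\gamma(u_{n_k}+u)$ with $u_{n_k}+u\to+\infty$, every point of $\alpha$ lies in $\Omega_f(\gamma)$, i.e. $\alpha\subset\Omega_f(\gamma)$. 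Being a subset of the achronal set $\Omega_f(\gamma)$, $\alpha$ is achronal, hence an inextendible achronal causal curve, i.e. a lightlike line. Uniqueness I would again draw from achronality: two distinct lightlike lines through $y$ contained in $\Omega_f(\gamma)$ would cross transversally at $y$, and taking a past point $p$ on one and a future point $q$ on the other, the broken null geodesic $p\to y\to q$ has a corner, so by \cite[Prop. 4.5.10]{hawking73} $p\ll q$ with $p,q\in\Omega_f(\gamma)$, contradicting achronality.

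Finally, for (iii) I would show that if $\gamma(t_1)\in\Omega_f(\gamma)$ then $\gamma$ must continue along the unique line $\alpha_1\subset\Omega_f(\gamma)$ through $y_1=\gamma(t_1)$ supplied by (ii). For small $\epsilon>0$, $b:=\gamma(t_1+\epsilon)\in J^{+}(y_1)$. If $b\in I^{+}(y_1)$, then with $\gamma(u_n)\to y_1$, $u_n\to+\infty$, openness of $\ll$ yields $\gamma(u_n)\ll b$ for large $n$, while $b=\gamma(t_1+\epsilon)\le\gamma(u_n)$; this closed timelike curve contradicts $\gamma\cap\mathcal{C}=\emptyset$. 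If instead $b\in J^{+}(y_1)\setminus I^{+}(y_1)$ but the connecting null geodesic $y_1\to b$ has a direction different from $\alpha_1$, then the corner at $y_1$ with the past half of $\alpha_1$ gives $\alpha_1(-\delta)\ll b$, and since $\alpha_1(-\delta)\in\Omega_f(\gamma)$ the same recurrence-plus-openness argument again produces a closed timelike curve on $\gamma$. Hence $b$ lies on $\alpha_1$, and iterating (the connecting segment being forced to be the null geodesic $\alpha_1$) shows $\gamma|_{[t_1,+\infty)}\subset\alpha_1\subset\Omega_f(\gamma)$, which is the ``cannot leave'' statement; when $\gamma$ is itself a lightlike line, matching directions at $y_1$ identifies $\gamma$ with the inextendible geodesic $\alpha_1$, so $\gamma\subset\Omega_f(\gamma)$, giving the dichotomy. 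The main obstacle I anticipate is the bookkeeping in the existence step, namely ensuring that the $h$-arclength shift sends the parameter to $+\infty$ in \emph{both} curve directions, so that the whole limit line and not merely its future half lands in $\Omega_f(\gamma)$; once this is secured, achronality of $\Omega_f(\gamma)$ is what powers both the uniqueness in (ii) and the rigidity in (iii).
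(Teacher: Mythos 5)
Your item (i), your achronality argument, your existence argument for the line through $y$, and your treatment of (iii) are all essentially sound; indeed your existence step (limit of the shifted inextendible curves $\gamma(u_n+\cdot)$, then observing that the whole limit curve lies in the achronal set $\Omega_f(\gamma)$ and is therefore achronal, hence a line) is a legitimate alternative to the paper's use of case (2) of the limit curve theorem of \cite{minguzzi07c}, and your (iii) replaces the paper's splitting argument ($\gamma_f\circ\alpha_p$ achronal, hence a line) by a direct geodesic-following iteration. The genuine gap is in the uniqueness claim of (ii). The proposition asserts that through $y$ there passes one and only one lightlike line \emph{of the spacetime}, whereas your corner argument only excludes a second line \emph{contained in} $\Omega_f(\gamma)$: you need both corner points $p,q$ to lie in $\Omega_f(\gamma)$ in order to invoke its achronality, so a putative second line $\alpha'$ through $y$ that leaves $\Omega_f(\gamma)$ is untouched by your argument. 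This is not a pedantic distinction: the full uniqueness is exactly what is needed later (Definition \ref{pds} of invariant sets, Theorem \ref{bpo}, and the paper's own proof of (iii), where a line \emph{not} known in advance to lie in $\Omega_f(\gamma)$ must be identified with $\alpha$). Your (iii) happens to sidestep this because your iteration only ever compares tangent directions with the line $\alpha_1\subset\Omega_f(\gamma)$, but statement (ii) as written remains unproved.

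The missing ingredient can be supplied in two ways. The paper extracts from the limit curve theorem the additional relation $(w_2,w_1)\in\bar{J}^{+}$ between a future point $w_2$ and a past point $w_1$ of the constructed line $\alpha$; then, for any other line $\alpha'$ through $y$ (necessarily with a different tangent there, else the geodesics coincide), rounding the corners at $y$ gives $w'_1\ll w_2$ and $w_1\ll w'_2$, and openness of $I^{+}$ combined with $(w_2,w_1)\in\bar{J}^{+}$ yields $w'_1\ll w'_2$, so $\alpha'$ is not achronal. Alternatively, you can stay entirely within your own framework and use recurrence in place of $\bar{J}^{+}$: pick $p'\in\alpha'$ before $y$, $q'\in\alpha'$ after $y$, and $w_1,w_2\in\alpha$ before/after $y$; the corners give $p'\ll w_2$ and $w_1\ll q'$; since $w_2\in\Omega_f(\gamma)$, the curve $\gamma$ enters the open set $I^{+}(p')$ at some late parameter $s$, and since $w_1\in\Omega_f(\gamma)$ it enters $I^{-}(q')$ at some $s'>s$; then $p'\ll\gamma(s)\le\gamma(s')\ll q'$ shows $\alpha'$ is chronal, a contradiction. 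Either supplement closes the gap; without one of them, the ``one and only one'' in (ii) is established only relative to lines inside $\Omega_f(\gamma)$.
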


\begin{proof}
Let $x \in \Omega_f(\alpha)\cup \Omega_p(\alpha)$ and take $T \in
\mathbb{R}$, and $U\ni x$. There is $t'$ such that $\alpha(t') \in
U$. But $\alpha(t') \in \Omega_f(\gamma)$ and $U$ is a neighborhood
also for $\alpha(t')$ thus there is $t>T$, such that $\gamma(t) \in
U$. Since $T$ and $U$ are arbitrary, $x \in \Omega_f(\gamma)$.

Assume $\Omega_f(\gamma)$ is not achronal then there are $x_1, x_2
\in \Omega_f(\gamma)$ such that $(x_1,x_2) \in I^+$. Using the fact
that $I^+$ is open there are neighborhoods $U_1 \ni x_1$ and $U_2
\ni x_2$ such that $U_1\times U_2 \subset I^{+}$. Since $x_1, x_2
\in \Omega_f(\gamma)$ it is possible to find $t_2<t_1$ such that
$\gamma(t_2) \in U_2$ and $\gamma(t_1) \in U_1$, thus  $\gamma(t_2)
\le \gamma(t_1) \ll \gamma(t_2)$ and hence $\gamma \cap
\mathcal{C}\ne \emptyset$, a contradiction.

Take $y \in \Omega_f(\gamma)$, and assume without loss of generality
that $\gamma$ is parametrized with respect to $h$-length where $h$
is a complete Riemannian metric. It is possible to find a sequence
$t_k$, $t_{k+1}>t_k+k$, such that $\gamma(t_k) \to y$. By the limit
curve theorem case (2) \cite{minguzzi07c} the segments
$\gamma_k=\gamma \vert_{[t_k,t_{k+1}]}$ have both endpoints that
converge to $y$ and since their $h$-length goes to infinity and
$\gamma_k \cap \mathcal{C}=\emptyset$, there is a lightlike line
$\alpha$ passing through $y$ which is a limit (cluster) curve for
$\gamma_k$.

There can't be another  lightlike line $\alpha'$ passing through $y$
indeed it is possible to show that it cannot be distance maximizing.
Indeed, let $w_1,w_2 \in \alpha$, $w_1<y<w_2$, and analogously let
$w'_1,w'_2 \in \alpha'$, $w'_1<y<w'_2$. The limit curve theorem
which has allowed to construct $\alpha$ states also that $(w_2,w_1)
\in \bar{J}^{+}$. Now, $\alpha'$ and $\alpha$ have different tanget
vectors at $y$, otherwise, being geodesics they would coincide, thus
rounding off the corners at $y$ it follows $w'_1 \ll w_2$ and $w_1
\ll w'_2$ which together with $(w_2,w_1) \in \bar{J}^{+}$ give,
since $I^{+}$ is open, $w'_1 \ll w'_2$, thus $\alpha'$ is not a
line, a contradiction.

Now, $\alpha\subset \Omega_f(\gamma)$, indeed, take $w \in \alpha$,
and let $U\ni w$ and $T \in \mathbb{R}$. Since $\alpha$ is a limit
curve, $U$ intersects all but a finite number of $\gamma_k$, in
particular it is possible to find $s \in \mathbb{N}$ such that
$t_s>T$ and $\gamma_s$ intersects $U$. Thus there is $t'_s \in
[t_s,t_{s+1}]$ such that $\gamma(t'_s) \in U$ and $t'_s \ge t_s>T$.
From the arbitrariness of $U$ and $T$, $w \in \Omega_f(\gamma)$.

Finally, assume that $z \in \gamma \cap \Omega_f(\gamma)$ and that
$\gamma$ is a lightlike line. By point (ii) through $z$ there passes
a unique lightlike line $\alpha$, and moreover such line has the
property $\alpha\subset\Omega_f(\gamma)$. But through $z$ there
passes already $\gamma$ thus $\alpha=\gamma$. More generally, if
$\gamma$ is only a causal curve, let $z \in \gamma \cap
\Omega_f(\gamma)$, $z=\gamma(t)$.  Split $\gamma$ in the two curves
$\gamma_p=\gamma\vert_{(-\infty, t]}$ and $\gamma_f=\gamma\vert_{[t,
+\infty)}$ and do the same with the lightlike line $\alpha$ passing
through $z$. The causal curve $\gamma_f\circ \alpha_p$ is achronal
because if not there are points $z_p \in \alpha_p$, $z_f\in
\gamma_f$, with $z_p \ll z_f$. But since $z_p \in
\Omega_f(\gamma)=\Omega_f(\gamma_f)$, it would be $\gamma_f\cap
\mathcal{C}\ne\emptyset$ a contradiction. Thus since $\gamma_f\circ
\alpha_p$ is achronal it is a lightlike line and hence it coincides
with $\alpha$, in particular $\gamma_f$ is  contained in
$\Omega_f(\gamma)$.

The proofs in the past case are analogous.
\end{proof}

\begin{corollary}
Non-total past imprisonment is equivalent to non-total future
imprisonment.
\end{corollary}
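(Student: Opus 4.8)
The plan is to exploit the time-reversal symmetry of all the definitions and prove only one implication, say that non-total future imprisonment implies non-total past imprisonment; the converse then follows verbatim by interchanging the roles of future and past, as every preceding result was stated together with its past analogue. So I would assume $(M,g)$ is non-total future imprisoning and argue by contradiction, supposing there exists a past-inextendible causal curve that is totally past imprisoned in a compact set. First I would observe that by the opening Proposition (non-total imprisonment implies causality) the hypothesis forces $(M,g)$ to be causal, hence the chronology violating set $\mathcal{C}$ is empty. This is the device that lets me apply Proposition \ref{bai}, whose part (ii) requires precisely $\gamma \cap \mathcal{C} = \emptyset$.

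Next I would normalize the offending curve. Extending the totally past imprisoned curve arbitrarily into the future turns it into an inextendible curve $\gamma : \mathbb{R} \to M$ without altering its past behaviour, so $\gamma$ is still totally past imprisoned and satisfies $\gamma \cap \mathcal{C} = \emptyset$. By the past version of Proposition \ref{kic}, total past imprisonment is equivalent to $\Omega_p(\gamma)$ being non-empty and compact. Choosing any $y \in \Omega_p(\gamma)$, the past version of Proposition \ref{bai}(ii) supplies a lightlike line $\alpha$ through $y$ with the crucial property $\alpha \subset \Omega_p(\gamma)$.

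The contradiction is then immediate: $\alpha$, being a lightlike line, is an inextendible causal curve, in particular future-inextendible, and its image lies inside the compact set $\Omega_p(\gamma)$. Thus $\alpha$ is totally future imprisoned in that compact set, contradicting the standing assumption of non-total future imprisonment. Hence $(M,g)$ must be non-total past imprisoning, and by the symmetric argument the two conditions coincide.

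I do not expect a genuine obstacle here, because the substantial work has already been carried out in Proposition \ref{bai}: the construction of a single lightlike line sitting entirely inside the limit set through any of its points is exactly what converts "past" accumulation into a "future" imprisoned object. The only points demanding care are the reduction to a curve inextendible in both directions (so that Propositions \ref{kic} and \ref{bai} apply as stated) and the explicit verification that the hypothesis alone, via causality, guarantees $\mathcal{C}=\emptyset$; both are routine.
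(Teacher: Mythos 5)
Your proof is correct and is essentially the paper's own argument with the roles of past and future interchanged (the paper assumes non-total past imprisonment and produces a totally past imprisoned lightlike line inside the compact set $\Omega_f(\eta)$ via Propositions \ref{kic} and \ref{bai}(ii), which is exactly your argument time-reversed). Your extra care in extending the offending curve to one with domain $\mathbb{R}$ and in noting that non-total future imprisonment alone already yields causality are sound refinements of the same approach, not a different route.
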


\begin{proof}
Assume non-total past imprisonment holds, then the spacetime is
causal. If non-total future imprisonment does not hold then there is
a future imprisoned causal curve $\eta$, and $\Omega_f(\eta)$ is
compact and non-empty (Prop. \ref{kic}). Take $p\in \Omega_f(\eta)$,
since the spacetime is chronological, there is a lightlike line
passing through $p$ (Prop. \ref{bai}(ii)) contained and hence
totally past imprisoned in $\Omega_f(\eta)$. The contradiction
proves non-total past imprisonment implies non-total future
imprisonment, the other direction being analogous.
\end{proof}

Propositions \ref{kic} and \ref{bai}  imply that if $\gamma \cap
\mathcal{C}=\emptyset$ and the inextendible causal curve $\gamma$ is
partially future imprisoned in a compact set
 then $\Omega_f(\gamma)$ is non-empty, generated by lightlike lines, and that there
is a privileged field of future directed null directions over
$\Omega_f(\gamma)$. Introduce a complete Riemannian metric on $M$,
and normalize the field of directions so as to obtain a field of
future directed lightlike vectors $n: \Omega_f(\gamma) \to
T\Omega_f(\gamma)$ which satisfies $\nabla_n n=h(x) n$, for some
scalar field $h$. It is then possible to define over
$\Omega_f(\gamma)$ the integral flow $\phi: \mathbb{R} \times
\Omega_f(\gamma) \to \Omega_f(\gamma) $ of $n$,
\[
\frac{\dd \phi_t(x_0)}{\dd t}=n(\phi_t(x_0)).
\]
which has the peculiarity of not having fixed points since $n \ne
0$. By proposition \ref{bai}(ii), $\Omega_f(\gamma)$ is invariant
under the flow $\phi_t$. Since the concept of invariance under the
flow $\phi$, is actually independent of the Riemannian metric chosen
to define $\phi$, it is convenient to state it in an equivalent but
clearer way as follows
\begin{definition} \label{pds}
 A non-empty closed subset $\Omega$ of $M$ is said to be {\em
invariant}, if (i) through each one of its points there passes one
and only one lightlike line and (ii) the entire line is contained in
$\Omega$. An invariant set is {\em minimal} if it has no invariant
proper subset.
\end{definition}

\begin{remark} \label{pda}
The definition of invariant set must embody the requirement of
non-emptiness, otherwise the empty set would be an invariant set and
no invariant set but the empty set would be minimal. Unfortunately,
in many references about dynamical systems, where similar
definitions are introduced, the non-emptiness condition is
incorrectly omitted
 (for instance \cite[p. 184]{hartman64}). Note that
according to the terminology of this work a minimal invariant set is
an invariant set and hence it is non-empty.
\end{remark}

\begin{lemma} \label{per}
The union of a finite family of invariant sets is an invariant set.
The intersection of an arbitrary family of invariant sets, if
non-empty, is an invariant set.
\end{lemma}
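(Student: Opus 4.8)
The plan is to verify the four defining conditions of an invariant set from Definition \ref{pds} — non-emptiness, closedness, uniqueness of the lightlike line through each point (condition (i)), and containment of that line (condition (ii)) — separately for the union and for the intersection. The crucial observation, which renders both parts almost immediate, is that uniqueness of the lightlike line through a point $y$ is a statement about $y$ in the ambient spacetime $M$: once $y$ lies in \emph{some} invariant set there is exactly one lightlike line of $M$ through $y$, irrespective of which invariant set one regards $y$ as belonging to. Hence the only condition that genuinely interacts with the set-theoretic operation is the containment condition (ii).

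For the union, let $\Omega_1,\dots,\Omega_k$ be invariant and put $\Omega=\bigcup_{i=1}^{k}\Omega_i$. Non-emptiness is clear, and closedness holds because a finite union of closed sets is closed — this is the only place where finiteness enters. Given $y\in\Omega$, choose $i$ with $y\in\Omega_i$; by invariance of $\Omega_i$ there passes one and only one lightlike line $\alpha$ through $y$, and $\alpha\subset\Omega_i\subset\Omega$. Thus $\Omega$ satisfies (i) and (ii) and is invariant.

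For the intersection, let $\{\Omega_j\}_{j\in J}$ be an arbitrary family of invariant sets and suppose $\Omega=\bigcap_{j\in J}\Omega_j\neq\emptyset$. Non-emptiness is the standing hypothesis, and closedness follows since an arbitrary intersection of closed sets is closed. Given $y\in\Omega$ we have $y\in\Omega_j$ for every $j$; invariance of any single $\Omega_j$ supplies the unique lightlike line $\alpha$ through $y$, while invariance of each $\Omega_j$ forces $\alpha\subset\Omega_j$ for all $j$, whence $\alpha\subset\bigcap_{j}\Omega_j=\Omega$. Therefore $\Omega$ is invariant.

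There is no serious obstacle here: the argument is a direct unwinding of the definition. The two points demanding care are the explicit appeal to finiteness for the closedness of the union — an infinite union of closed sets may fail to be closed, which is exactly why the lemma restricts to finite families in that case — and the necessity of the non-emptiness hypothesis for the intersection, already flagged in Remark \ref{pda}.
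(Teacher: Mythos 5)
Your proof is correct and is precisely the ``trivial consequence of the definitions'' that the paper's proof asserts without spelling out: the same direct verification of non-emptiness, closedness, and conditions (i) and (ii) of Definition \ref{pds}, with the key observation that uniqueness of the line through a point is a property of the point in $M$ and so passes automatically to unions and intersections. No difference in approach; yours simply makes the routine details explicit.
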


\begin{proof}
They are a trivial consequence of the definitions.
\end{proof}

There may be other invariant sets inside $\Omega_f(\gamma)$, indeed
for every $y \in \Omega_f(\gamma)$ there is a unique lightlike line
$\alpha \subset \Omega_f(\gamma)$ passing through $y$, and
$\Omega_f(\alpha) \cup \Omega_p(\alpha) \subset \Omega_f(\gamma)$.
Using again proposition \ref{bai}(ii), it follows that both
$\Omega_f(\alpha)$ and $\Omega_p(\alpha)$ satisfy conditions (i) and
(ii) of invariant sets (definition \ref{pds}). However, they could
be empty. They are certainly non-empty if $\gamma$ is not only
partially imprisoned but also totally imprisoned. Indeed, in this
case since $\alpha \subset\Omega_f(\gamma)$, and $\Omega_f(\gamma)$
is compact, $\alpha$ is totally imprisoned both in the past and in
the future and hence both $\Omega_p(\alpha)$ and $\Omega_f(\alpha)$
are non-empty and compact (proposition \ref{kic}).

Thus a totally future imprisoned curve leads to a partial order of
invariant sets (where the order is the usual inclusion) inside
$\Omega_f(\gamma)$. Actually, it is possible to prove that there
exists a minimal invariant set as the next proposition shows.
%
%
%

\begin{theorem} \label{bpo}
Let $\eta$ be a inextendible causal curve totally future imprisoned
in a compact set $C$, and let $\eta \cap \mathcal{C}=\emptyset$ with
$\mathcal{C}$ the chronology violating set of $(M,g)$. Then there is
a  minimal invariant set $\Omega \subset \Omega_f(\eta) \subset C$.
Through every point of $\Omega$ there passes one and only one
lightlike line, this lightlike line is entirely contained in
$\Omega$ and for every lightlike line $\alpha \subset \Omega$ it is
$\overline{\alpha}=\Omega_f(\alpha)=\Omega_p(\alpha)=\Omega$.
Another consequence is that all the points belonging to $\Omega$
share the same chronological past and future. An analgous version
holds with $\eta$ past imprisoned.
\end{theorem}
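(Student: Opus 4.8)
The plan is to exhibit a minimal invariant set inside $\Omega_f(\eta)$ by a Zorn's lemma argument applied to the partially ordered family of invariant subsets of $\Omega_f(\eta)$, ordered by reverse inclusion. First I would observe that $\Omega_f(\eta)$ itself is invariant: by Proposition \ref{kic} it is non-empty and compact (since $\eta$ is totally imprisoned), and by Proposition \ref{bai}(ii) through each of its points there passes a unique lightlike line which is entirely contained in $\Omega_f(\eta)$, so conditions (i) and (ii) of Definition \ref{pds} hold. Thus the family of invariant subsets of $\Omega_f(\eta)$ is non-empty.

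Next I would verify the chain condition needed for Zorn's lemma. Let $\{\Omega_i\}_{i\in I}$ be a chain (a totally ordered family under inclusion) of invariant subsets of $\Omega_f(\eta)$. Each $\Omega_i$ is a non-empty closed subset of the compact set $C$, hence compact, and the chain has the finite intersection property since it is totally ordered. By compactness the total intersection $\Omega_\infty=\bigcap_{i\in I}\Omega_i$ is therefore non-empty, and it is closed as an intersection of closed sets. By Lemma \ref{per} (the intersection statement, which applies precisely because the intersection is non-empty) $\Omega_\infty$ is again an invariant set, and it is a lower bound for the chain. Zorn's lemma then yields a minimal element $\Omega$ of the family, which is by construction a minimal invariant set contained in $\Omega_f(\eta)\subset C$. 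I expect this compactness-driven nonemptiness of chain intersections to be the main obstacle to get right: without total imprisonment $\Omega_f(\eta)$ need not be compact and the intersection of a descending chain could collapse to the empty set, so it is essential here that $\eta$ is totally, not merely partially, imprisoned.

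It remains to establish the structural properties of $\Omega$. Conditions (i) and (ii) of invariance already give that through each point of $\Omega$ there passes a unique lightlike line contained in $\Omega$. Fix such a line $\alpha\subset\Omega$. Since $\Omega\subset C$ is compact, $\alpha$ is totally imprisoned in both time directions, so by Proposition \ref{kic} the sets $\Omega_f(\alpha)$ and $\Omega_p(\alpha)$ are non-empty and compact, and by Proposition \ref{bai}(ii) (applied in the future and past cases) each of them satisfies conditions (i) and (ii), hence each is an invariant set contained in $\Omega$. By minimality of $\Omega$ we must have $\Omega_f(\alpha)=\Omega_p(\alpha)=\Omega$. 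Finally, from Proposition \ref{kic} the closure of the image satisfies $\overline{\alpha}=\Omega_p(\alpha)\cup\alpha\cup\Omega_f(\alpha)$; since both $\Omega_f(\alpha)$ and $\Omega_p(\alpha)$ equal $\Omega$ and $\alpha\subset\Omega$, this gives $\overline{\alpha}=\Omega$, establishing the chain of equalities $\overline{\alpha}=\Omega_f(\alpha)=\Omega_p(\alpha)=\Omega$.

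For the last assertion that all points of $\Omega$ share a common chronological past and future, I would argue from the density of every line in $\Omega$. Take $p,q\in\Omega$ and let $\alpha$ be the unique lightlike line through $p$; since $\Omega_f(\alpha)=\Omega_p(\alpha)=\Omega$, the point $q$ is an accumulation point of $\alpha$ in both directions, so there are points of $\alpha$ arbitrarily close to $q$ and causally related to $p$ in either order. Using that $I^{+}$ and $I^{-}$ are open and that chronological relations pass to suitable limits, I would deduce $I^{+}(p)=I^{+}(q)$ and $I^{-}(p)=I^{-}(q)$; a clean way to phrase this is that $q\in\overline{\alpha}$ together with achronality of $\alpha$ (from $\Omega\cap\mathcal{C}=\emptyset$, which holds by Proposition \ref{bai}(ii)) forces the chronological sets at $p$ and $q$ to coincide. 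The symmetry of the construction in $p$ and $q$ then gives the common past and future for all points of $\Omega$. The past-imprisoned version follows verbatim by interchanging the roles of $\Omega_f$ and $\Omega_p$ throughout.
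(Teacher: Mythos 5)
Your proposal is correct and follows essentially the same route as the paper: both establish that $\Omega_f(\eta)$ is a compact invariant set, use a choice-principle argument in which compactness guarantees that nested intersections of invariant sets are non-empty (your Zorn's-lemma-with-lower-bounds phrasing is just the standard reformulation of the paper's Hausdorff maximal-chain argument), then exploit minimality to get $\Omega_f(\alpha)=\Omega_p(\alpha)=\Omega$ for any line $\alpha\subset\Omega$, and finally derive the common chronological pasts and futures from the density of each line in $\Omega$ together with openness of $I^{\pm}$. The only cosmetic difference is in the last step, where the paper argues directly via $z\in\Omega_f(\gamma_x)$ and a neighborhood of $z$ contained in $I^{-}(z')$ rather than appealing to achronality, but the substance is identical.
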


\begin{proof}
 Since $\eta$ is totally future
imprisoned, $\Omega_f(\eta)$ is non-empty and compact. Consider the
set  of all the
 invariant subsets of $\Omega_f(\eta)$ ordered by inclusion. This
set is non-empty since $\Omega_f(\eta)$ itself is invariant. By
Hausdorff's maximum principle (equivalent to Zorn's lemma and the
axiom of choice) there is a maximal chain  of invariant sets,
$\mathcal{Z}$, and the intersection of all the elements of the chain
gives a set $\Omega$ which, if non-empty, is an invariant set (lemma
\ref{per}) which has to be minimal otherwise the chain would not be
maximal. Since $\Omega_f(\eta)$ is compact all the elements of
$\mathcal{Z}$ are non-empty and compact, thus $\Omega$ is non-empty
being the intersection of a nested family of non-empty compact sets.
This last result is standard in topology (\cite[theorem
3.1.1]{engelking89}, in some references it is called Cantor's
intersection lemma), I include here the proof.


If $\bigcap_{U \in \mathcal{Z}} U=\emptyset$ then taking
complements, $\bigcup_{U \in \mathcal{Z}} U^C=M$, thus the sets
$U^C$ with $U \in \mathcal{Z}$ would give an open covering of $M$
and hence of $\Omega_f(\eta)$. Extract a finite subcovering of this
last set, and label the elements of the covering $U_1^C$, $U_2^C$,
$\ldots$, $U^C_n$ with $U_{j+1} \subset U_j$. Thus $U_n^C \supset
\Omega_f(\gamma)$ which is impossible since $U_n \subset
\Omega_f(\gamma)$ is non-empty. The contradiction proves that
$\Omega$ is non-empty and hence minimal invariant.

Since $\Omega$ is invariant through every point of it there passes a
unique lightlike line entirely contained in $\Omega$. Taken $\alpha
\subset \Omega$, it is $\Omega_f(\alpha),\Omega_p(\alpha) \subset
\Omega$, but since $\Omega$ is actually the minimal invariant set
the equality must hold. Since
$\overline{\alpha}=\Omega_p(\alpha)\cup\alpha\cup\Omega_f(\alpha)$,
we have also $\overline{\alpha}=\Omega$.

Finally, let $x,z \in \Omega$, $x \ne z$, and let $z' \in I^{+}(z)$.
Consider the lightlike line  $\gamma_x$ passing through $x$. Since
$\Omega_f(\gamma_x)=\Omega$, $z \in \Omega_f(\gamma_x)$, and in
particular since there is a whole neighborhood of $z$ in the past of
$z'$, $z' \in I^{+}(x)$, and hence $I^{+}(z) \subset I^{+}(x)$.
Exchanging the roles of $x$ and $z$, and  of pasts and futures, we
get $I^{+}(x)=I^{+}(z)$ and $I^{-}(x)=I^{-}(z)$.

\end{proof}

Recall that a feebly distingushing spacetime is a spacetime for
which no two causally related events can have the same chronological
past and chronological future. It is a weaker condition than weak
distinction in which the causal relation of the events is not
mentioned. Moreover, it is stronger than causality and differs from
both causality and weak distinction \cite{minguzzi07e}.

Actually, there is not very much difference between weak distinction
and feeble distinction, as it is quite difficult to produce examples
of spacetimes which are feebly distinguishing but non-weakly
distinguishing. Nevertheless,  this difference is mentioned and
enphasized here because the next theorem stated with feeble
distinction is slightly stronger than with weak distinction.

The  next result  improves Hawking and Ellis' \cite{hawking73} who
assume past or future distinction. Note that without this result the
relative strength of non-total imprisonment and weak distinction
would have been left open, a fact that has so far forbidden the
placement of non-total imprisonment into the causal ladder.

\begin{corollary}
If a spacetime is feebly distinguishing then it is non-total
imprisoning.
\end{corollary}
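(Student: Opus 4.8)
The plan is to argue by contradiction, leveraging the structure theory of invariant sets developed in Theorem~\ref{bpo}. Suppose the spacetime is feebly distinguishing but \emph{not} non-total imprisoning. Then there exists an inextendible causal curve $\eta$ totally future imprisoned in some compact set $C$. First I would observe that feeble distinction implies causality: indeed, a closed causal curve traces out two causally related points with identical chronological past and future, violating feeble distinction. With causality in hand, the chronology violating set $\mathcal{C}$ is empty, so in particular $\eta \cap \mathcal{C} = \emptyset$ and the hypotheses of Theorem~\ref{bpo} are satisfied.

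Applying Theorem~\ref{bpo}, I obtain a minimal invariant set $\Omega \subset \Omega_f(\eta) \subset C$, and the theorem guarantees two crucial facts: through every point of $\Omega$ there passes a unique lightlike line entirely contained in $\Omega$, and \emph{all points of $\Omega$ share the same chronological past and the same chronological future}. The key step is then to exhibit two \emph{distinct} points of $\Omega$ that are \emph{causally related}, which together with the shared-past-and-future property will directly contradict feeble distinction.

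To produce such a pair I would pick any point $x \in \Omega$ and consider the unique lightlike line $\alpha \subset \Omega$ through $x$. Since $\Omega$ is a genuine lightlike line (not a single point) — it contains the whole inextendible curve $\alpha$ — I can choose a second point $y = \alpha(s)$ with $s$ distinct from the parameter value at $x$, so that $x \ne y$ and $x \le y$ (or $y \le x$), i.e. the two are causally related along $\alpha$. Both $x,y$ lie in $\Omega$, hence by Theorem~\ref{bpo} they satisfy $I^{+}(x) = I^{+}(y)$ and $I^{-}(x) = I^{-}(y)$. This is precisely a pair of distinct causally related events with coinciding chronological past and future, contradicting feeble distinction. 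The contradiction establishes non-total future imprisonment, and by the earlier corollary on the equivalence of non-total past and future imprisonment, full non-total imprisonment follows.

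The main obstacle I anticipate is ensuring that $\Omega$ really contains two distinct causally related points — that it is not somehow degenerate. This is resolved by the fact that $\Omega$ contains an entire inextendible lightlike line $\alpha$, and an inextendible lightlike geodesic cannot reduce to a single point; any two of its points are causally related by construction. One should be slightly careful to invoke the correct direction (future imprisonment gives $\Omega \subset \Omega_f(\eta)$, and the past version of Theorem~\ref{bpo} handles the symmetric case), but no delicate limit-curve argument is needed here since all the hard analytic work is already packaged inside Theorem~\ref{bpo}.
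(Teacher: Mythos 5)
Your proposal is correct and follows essentially the same route as the paper: invoke Theorem~\ref{bpo} to obtain a minimal invariant set $\Omega$, take a lightlike line inside it, and note that its distinct points are causally related yet share the same chronological past and future, contradicting feeble distinction. The only cosmetic difference is that you verify feeble distinction implies causality while the paper only needs (and notes) the weaker implication to chronology, so that $\eta \cap \mathcal{C} = \emptyset$; both suffice.
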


\begin{proof}
It follows trivially from theorem \ref{bpo}, because if $\eta$ is a
causal curve totally future imprisoned in a compact set, then there
is a minimal invariant set $\Omega \subset \Omega_f(\eta)$ (it is
easy to see that feeble distinction implies chronology and hence
$\eta \cap \mathcal{C}=\emptyset$). Let $\gamma \subset \Omega$ be a
lightlike line, then the points of $\gamma$ are causally related but
share the same chronological past and future in contradiction with
feeble distinction.
\end{proof}

The fact that feeble distinction differs from non-total imprisoning
follows, for instance, from figure 1(A) of \cite{minguzzi07e}.

\section{Conclusions}
The properties of non-total imprisonment and non-partial
imprisonment have been placed into the causal ladder. The placement
of non-partial imprisonment has required the proof that non-partial
imprisonment implies distinction  (Prop. \ref{nix}) and the
production of an example which shows that the two levels do indeed
differ (figure \ref{nonp}).

More interesting and rich has proved the study of total
imprisonment. To start with I considered a partial future imprisoned
causal curve $\gamma$. I proved that on every event of the limit set
$\Omega_f(\gamma)$  there passes one and only one lightlike line,
and that this line is contained in $\Omega_f(\gamma)$. This result
allows to define a flow on the same set and makes it possible to
study this situation in analogy with dynamical systems. If $\gamma$
is actually totally imprisoned then it is possible to infer the
existence of a minimal invariant set on which feeble distinction is
violated. Thus feeble distinction implies non-total imprisonment,
and the non-imprisonment conditions can thus be placed in the causal
ladder.

There remains some interesting work to be done along the lines
followed here in the study of totally imprisoned curves. For
instance, in 2+1 dimensions one could perhaps apply the
Poincar\'e-Bendixon-Schwartz theorem to the invariant set to infer
more results on the connection between imprisonment and manifold
topology. However, in order to follow this path it is necessary to
prove that the invariant set is actually smooth or has a  higher
degree of differentiability than Lipschitz continuity (which follows
from its achronality, see proposition \ref{bai}(ii)). Similar ideas
have indeed been applied in the study of chronology violating
spacetimes \cite{hawking92,tiglio98} but often without adressing the
differentiability issues which indeed arise because the
Poincar\'e-Bendixon-Schwartz theorem, as formulated by Schwartz,
would require the invariant set to be a $C^2$ manifold.


\section*{Acknowledgments}
I warmly thank a referee for very helpful comments, in particular
for stressing the importance of the non-emptinesss condition (see
remark \ref{pda}).
This work has been partially supported by GNFM of INDAM and
by MIUR under project PRIN 2005 from Universit\`a di Camerino.


\end{document}